\newif\ifreport\reporttrue
\newtheorem{definition}{Definition}
\newtheorem{lemma}{Lemma}
\newtheorem{proposition}{Proposition}
\def\blue{\color{blue}}
\colorlet{blue}{black}
\begin{document}

\title{On the Monotonicity of Information Aging}
\author{
\IEEEauthorblockN{MD Kamran Chowdhury Shisher \text{and} Yin Sun}
\IEEEauthorblockA{Department of ECE, Auburn University, AL, USA}
\IEEEcompsocitemizethanks{\IEEEcompsocthanksitem{This work was supported in part by NSF grant CNS-2239677 and ARO grant W911NF-21-1-0244.}}
}

\newcommand{\ignore}[1]{{}}
\pagestyle{plain}
\def\blue{\color{blue}}
\maketitle

\begin{abstract}
In this paper, we analyze the monotonicity of information aging in a remote estimation system, where historical observations of a Gaussian autoregressive AR(p) process are used to predict its future values. We consider two widely used loss functions in estimation: (i) logarithmic loss function for maximum likelihood estimation and (ii) quadratic loss function for MMSE estimation. The estimation error of the AR(p) process is written as a generalized conditional entropy which has closed-form expressions. By using a new information-theoretic tool called \emph{$\epsilon$-Markov chain}, we can evaluate the divergence of the AR(p) process from being a Markov chain. When the divergence $\epsilon$ is large, the estimation error of the AR(p) process can be far from a non-decreasing function of the Age of Information (AoI). Conversely, for small divergence $\epsilon$, the inference error is close to a non-decreasing AoI function. Each observation is a short sequence taken from the AR(p) process. As the observation sequence length increases, the parameter $\epsilon$ progressively reduces to zero, and hence the estimation error becomes a non-decreasing AoI function. These results underscore a connection between the monotonicity of information aging and the divergence of from being a Markov chain.
\end{abstract}


\section{Introduction}
Timely updates of sensor measurements are crucial for real-time state estimation and decision-making in networked controlled and cyber-physical systems, such as UAV navigation, real-time surveillance, factory automation, and weather monitoring systems. To evaluate the timeliness of sensor measurements received from a
remote sensor, the
concept of \emph{Age of Information} (AoI) was introduced in \cite{song1990performance, kaul2012real}. Let $U(t)$ be the generation time of the freshest sensor measurement delivered to the receiver by time $t$. The AoI $\Delta(t)$, as a function of time $t$, is defined as
\begin{align}\label{AoIIntro}
    \Delta(t):=t-U(t),
\end{align}
which is the time difference between the current time $t$ and the generation time $U(t)$ of the most recently delivered sensor data. A smaller AoI indicates the presence of recently generated sensor data at the receiver. There has been a significant research efforts on analyzing and optimizing AoI on communication networks \cite{yates2021age,pappas2022age, kaul2012real,sun2017update,yates2015lazy, kadota2018optimizing, SunSPAWC2018, SunNonlinear2019, chen2021uncertainty, wang2022framework, orneeTON2021, Tripathi2019, klugel2019aoi, bedewy2021optimal, sun2019closed, Kadota2018, ornee2023whittle, pan2023sampling, sun2023age, sun2022age, SunTIT2020, tripathi2021computation, shisher2021age, ShisherMobihoc, shisher2023learning, ShisherToN, ornee2023context, ari2023goal}. 

In this paper, we investigate a remote estimation system where a time-varying target is estimated based on observations collected from a sensor. Due to communication delays and transmission errors, the observations delivered at the receiver may not be fresh. Previous studies assumed that system performance degrades monotonically as observations become stale \cite{kadota2018optimizing, SunNonlinear2019, bedewy2021optimal, Tripathi2019, tripathi2021computation}. This assumption was justified for Markov sources \cite{SunNonlinear2019}: However, recent machine learning experimental studies \cite{shisher2021age, ShisherMobihoc, ShisherToN} showed that this monotonic assumption does not always hold.
In certain scenarios, it was found that stale data with AoI $>0$ can even achieve a smaller inference error than fresh data with AoI $= 0$, which is counter-intuitive. Information-theoretic tools was developed in  \cite{shisher2021age, ShisherMobihoc, ShisherToN} to interpret such non-monotonic information aging phenomena in machine learning experiments. To further understand why and in what scenarios information aging could be non-monotonic, in this paper we use a model-based approach to analyze information aging. Specifically, we derive closed-form expressions for the remote estimation error of Gaussian autoregressive AR(p) processes and study how the monotonicity of information aging is affected by the parameters of the AR(p) process. The contributions of this paper are summarized as follows:

\begin{itemize}
   \item We analyze the impact of fresh observations on the remote estimation of a p-th order Gaussian autoregressive process (AR(p)). The AR(p) process is widely used in modeling channel state information \cite{jakes1994microwave}, economic forecasting \cite{stock2001vector}, biomedical signals \cite{isaksson1981computer}, and control systems \cite{klugel2019aoi, champati2019performance, ayan2019age}. Our study is more general than the earlier model-based studies in AoI literature \cite{klugel2019aoi, champati2019performance, ayan2019age}, which are centered on AR(1) processes.
    
    \item The estimation error of the AR(p) process is formulated as a generalized conditional entropy (refer to Lemma \ref{lemma1}). Closed-form expressions are provided for computing the estimation error (see Propositions \ref{prop1}-\ref{prop2}). These expressions are provided for two commonly used loss functions in machine learning and remote estimation: (i) quadratic loss and (ii) logarithmic loss. 

    \item By using a new information-theoretic tool called $\epsilon$-\emph{Markov chain} \cite{shisher2021age, ShisherMobihoc, shisher2023learning}, we evaluate the divergence of the AR(p) process from being Markovian. We then characterize the monotonicity of the estimation error with respect to AoI using the parameter $\epsilon$ (refer to Lemma \ref{lemma2}).  Specifically, if $\epsilon$ is close to zero, the target process is close to being Markov and the estimation error becomes a non-decreasing function of AoI; otherwise, if $\epsilon$ deviates significantly from zero, the target process is far from being Markov and the estimation error can exhibit highly non-monotonic behavior in AoI. 
    
    \item A closed-form expression is provided to compute $\epsilon$ from AR(p) process (see Proposition \ref{prop3}(a)). Additionally, we characterize the parameter $\epsilon$ as a function of the observation time-sequence length. As the observation time-sequence length of the AR(p) process increases to $p$, $\epsilon$ reduces to zero, and hence the estimation error becomes a non-decreasing function of AoI (See Proposition \ref{prop3}(b)). 

    \item Numerical results verify our theoretical findings (see Fig. \ref{fig:AR} and Table \ref{table:1}). 
\end{itemize}
\section{System Model}

Consider the remote estimation system composed of a sensor, a transmitter, and an estimator, as illustrated in Fig. 1. The goal of the system is to estimate a time-varying target $Y_t \in \mathbb R$. We consider that the target $Y_t$ evolves as 
\begin{align} \label{Y}
    Y_t&=X_t+N_t,
\end{align}
where $X_t \in \mathbb R$ follows a discrete-time $p$-th order autoregressive (AR($p$)) linear time-invariant system:
\begin{align}\label{V}
    X_t&=a_1 X_{t-1}+ a_2 X_{t-2}+ \ldots+ a_p X_{t-p}+W_t,
\end{align}
$N_t \in \mathbb R$ and $W_t \in \mathbb R$ are {i.i.d.} Gaussian noises over time with zero mean, and $a_k \in \mathbb R$ for all $k=1, 2, \ldots, p$. Let $\sigma^2_{Y_t}$ and $\sigma^2_{X_t}$ be the variances of $Y_t$ and $X_t$, respectively.

At every time slot $t$, the sensor observes $X_t$ and feeds the observation to the transmitter. The transmitter progressively sends the sensory data to the estimator through a communication channel. Due to communication delays and channel errors, the delivered sensor observations may not be fresh. The most recently received sensor observation at the estimator is $X_{t-\Delta(t)}$ that was generated at time $t-\Delta(t)$. The time difference $\Delta(t) \in \mathbb Z^{+}$ between the generation time $t-\Delta(t)$ and the current time $t$ is the AoI defined in \eqref{AoIIntro}. The estimator takes a consecutive sequence of sensor observations (also called feature sequence) $(X_{t-\Delta(t)}, X_{t-\Delta(t)-1}, \ldots, X_{t-\Delta(t)-l+1}) \in \mathbb R^l$ and the AoI $\Delta(t) \in \mathbb Z^{+}$ as inputs and generates an output $a=\phi(\mathbf X^l_{t-\Delta(t)}, \Delta(t)) \in \mathcal A$, where $\mathbf X^l_{t-\Delta(t)}=[X_{t-\Delta(t)}, X_{t-\Delta(t)-1}, \ldots, X_{t-\Delta(t)-l+1}]$ is the feature sequence vector and the estimator is represented by the function $\phi: \mathbb R^l \times \mathbb Z^{+} \mapsto \mathcal A$. The performance of the estimator is measured by a loss function $L: \mathbb R \times \mathcal A \mapsto \mathbb R$, where $L(y, a)$ is the incurred loss if the output $a \in \mathcal A$ is used for estimation when $Y_t=y$. The loss function $L$ is determined by the \emph{goal} of the remote estimation system. 


We assume that the age process $\{\Delta(t),t = 0,1,2,\ldots\}$ is signal-agnostic and the signal process $\{(Y_t, X_t), t=0, 1, \ldots\}$ is stationary. Under these assumptions, if $\Delta(t)=\delta$, then the minimum estimation error at time slot $t$ can be expressed as a function of AoI $\delta$ and feature sequence length $l$ \cite{ShisherMobihoc, ShisherToN}, given by
\begin{align}\label{instantaneous_err1} 
&\mathrm{err}_{\mathrm{estimation}}(\delta, l)\nonumber\\
&:=\min_{\phi \in \Phi} \mathbb E_{Y, \mathbf X^l \sim P_{Y_t, \mathbf X^{l}_{t-\delta}}}\bigg[L\bigg(Y,\phi\bigg(\mathbf X^l, \delta\bigg)\bigg)\bigg],
\end{align}
where the set of functions $\Phi$ consists of all functions that map from $\mathbb R^l \times \mathbb Z^{+}$ to $\mathcal A$ and $P_{Y_t, \mathbf X^{l}_{t-\delta}}$ is the joint distribution of the target $Y_t$ and the feature $\mathbf X^{l}_{t-\delta}$.
\begin{figure}[t]
\centering
\includegraphics[width=0.50\textwidth]{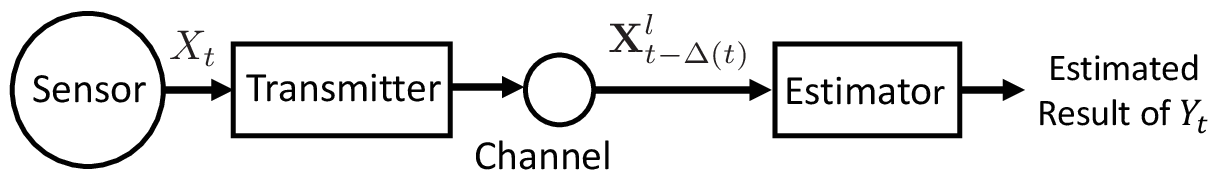}
\caption{\small  A remote estimation system. 
\label{fig:scheduling}
}
\vspace{-3mm}
\end{figure}
\section{Monotonicity of Estimation Error with AoI: An $\epsilon$-Markov Chain Approach}
In this section, we employ an information-theoretic analysis and an $\epsilon$-Markov chain model introduced in \cite{shisher2021age,ShisherMobihoc,ShisherToN} to interpret how the estimation error $\mathrm{err}_{\mathrm{estimation}}(\delta, l)$ varies with the AoI $\delta$ and the feature length $l$.
\subsection{Information-theoretic Metrics for Estimation Error}
We begin with the definitions of $L$-entropy and $L$-conditional entropy. The $L$-entropy of a random variable $Y$ is defined as \cite{ShisherMobihoc, ShisherToN, Dawid1998, farnia2016minimax}
\begin{align}\label{LY}
H_L(Y):=\min_{a \in \mathcal A} \mathbb E_{Y\sim P_Y}[L(Y, a)].
\end{align}
The $L$-conditional entropy of $Y$ given $X$ is defined as \cite{ShisherMobihoc, ShisherToN, Dawid1998, farnia2016minimax}
\begin{align}\label{LYX}
H_L(Y|X):=\mathbb E_{x\sim P_X}[H_L(Y|X=x)],
\end{align}
where $H_L(Y|X=x)$ is given by
\begin{align}\label{LYx}
H_L(Y|X=x)=\min_{a \in \mathcal A} \mathbb E_{Y\sim P_{Y|X=x}}[L(Y, a)].
\end{align}

\begin{lemma}\label{lemma1}
Estimation error $\mathrm{err}_{\mathrm{estimation}}(\delta, l)$ is equal to $L$-conditional entropy of $Y_t$ given $\mathbf X^l_{t-\delta}$, i.e.,
\begin{align}\label{infEstimationError}
\mathrm{err}_{\mathrm{estimation}}(\delta, l)=H_L(Y_t|\mathbf X^l_{t-\delta}).
\end{align}
\end{lemma}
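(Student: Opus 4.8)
The plan is to prove equality by observing that, once the AoI $\delta$ is fixed, the estimator $\phi(\cdot, \delta)$ reduces to an arbitrary function $g:\mathbb{R}^l \to \mathcal{A}$, and then applying the tower property of conditional expectation followed by a pointwise minimization over the action space $\mathcal A$. The notation $\min$ in \eqref{instantaneous_err1} already presupposes attainment of the optimum, which is precisely what a measurable-selection argument will supply.

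First I would fix $\delta$ and note that, because $\Phi$ is the set of \emph{all} maps from $\mathbb{R}^l \times \mathbb{Z}^+$ to $\mathcal{A}$, the restriction $g(\cdot) := \phi(\cdot, \delta)$ ranges over all maps $g : \mathbb{R}^l \to \mathcal{A}$. Hence the objective in \eqref{instantaneous_err1} becomes $\min_{g} \mathbb{E}_{Y, \mathbf{X}^l \sim P_{Y_t, \mathbf{X}^l_{t-\delta}}}[L(Y, g(\mathbf{X}^l))]$. Next I would apply the law of total expectation, conditioning on the feature $\mathbf{X}^l_{t-\delta}$, to rewrite the expected loss as the iterated expectation
\[
\mathbb{E}_{Y, \mathbf{X}^l}\!\big[L(Y, g(\mathbf{X}^l))\big] = \mathbb{E}_{\mathbf{x} \sim P_{\mathbf{X}^l_{t-\delta}}}\!\Big[\mathbb{E}_{Y \sim P_{Y_t|\mathbf{X}^l_{t-\delta}=\mathbf{x}}}\!\big[L(Y, g(\mathbf{x}))\big]\Big].
\]
For the lower bound, since for each $\mathbf{x}$ the value $g(\mathbf{x})$ is a single action, the inner expectation is at least $\min_{a\in\mathcal A}\mathbb E_{Y\sim P_{Y_t|\mathbf X^l_{t-\delta}=\mathbf x}}[L(Y,a)] = H_L(Y_t\,|\,\mathbf X^l_{t-\delta}=\mathbf x)$; taking the outer expectation and infimizing over $g$ yields $\mathrm{err}_{\mathrm{estimation}}(\delta, l)\ge H_L(Y_t\,|\,\mathbf X^l_{t-\delta})$ by the definition in \eqref{LYX}. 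For the reverse inequality I would exhibit the pointwise minimizer $g^\star(\mathbf x)\in\argmin_{a\in\mathcal A}\mathbb E_{Y\sim P_{Y_t|\mathbf X^l_{t-\delta}=\mathbf x}}[L(Y,a)]$ and substitute it, so that the iterated expectation collapses to $\mathbb E_{\mathbf x}[H_L(Y_t\,|\,\mathbf X^l_{t-\delta}=\mathbf x)]=H_L(Y_t\,|\,\mathbf X^l_{t-\delta})$, matching the lower bound.

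The main obstacle is justifying the interchange of the minimization and the outer expectation, i.e., guaranteeing that the pointwise minimizer $g^\star$ can be chosen as a genuine measurable function of $\mathbf x$ and therefore lies in $\Phi$. This is a measurable-selection issue: the direction $\ge$ is immediate for any admissible $g$, but equality requires an admissible $g^\star$ attaining the conditional minimum almost surely. Under the standing regularity of the model it is resolved either abstractly, via a measurable selector for the $\mathcal A$-valued minimization of the map $a\mapsto\mathbb E[L(Y,a)\,|\,\mathbf X^l_{t-\delta}=\mathbf x]$, or concretely through the explicit minimizers computed later: for the quadratic loss the optimal $g^\star(\mathbf x)$ is the conditional mean $\mathbb E[Y_t\,|\,\mathbf X^l_{t-\delta}=\mathbf x]$, and for the logarithmic loss it is the conditional density, both of which are manifestly measurable in $\mathbf x$ (cf. Propositions \ref{prop1}--\ref{prop2}). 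With $g^\star$ admissible, the two inequalities combine to give \eqref{infEstimationError}.
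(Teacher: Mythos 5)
Your proof is correct and follows essentially the same route as the paper's: condition on the feature via the tower property, then interchange the minimization with the outer expectation, which is valid because $\Phi$ contains all maps from $\mathbb{R}^l \times \mathbb{Z}^{+}$ to $\mathcal{A}$. The only difference is presentational: you split the interchange into two inequalities and flag the measurable-selection issue explicitly, whereas the paper performs the exchange in a single chain of equalities and justifies it with the same richness-of-$\Phi$ remark.
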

\begin{proof}
See Appendix \ref{plemma1}.
\end{proof}

Lemma \ref{lemma1} implies that we can directly use the $L$-conditional entropy $H_L(Y_t|\mathbf X^l_{t-\delta})$ to analyze the estimation error $\mathrm{err}_{\mathrm{estimation}}(\delta, l)$ of a remote estimation system. By directly using the properties of $L$-information theoretic metrics \cite{ShisherMobihoc, ShisherToN}, the estimation error can be analyzed conveniently. The information-theoretic metrics in the prior studies \cite{soleymani2016optimal, SunSPAWC2018, SunNonlinear2019, wang2022framework,chen2021uncertainty} cannot be directly used to evaluate system performance.

\subsection{Evaluating $L$-conditional Entropy}
We will evaluate the $L$-conditional entropy associated with the loss function $L$. The loss function $L$ is determined based on the objective of a remote estimation system. For example, in minimum mean-squared estimation (MMSE), the loss function is $L_2(y,\hat y) =(y - \hat{y})^2$, where the output $a = \hat {y}$ is an estimate of the target $Y_t =y$. In maximum likelihood estimation of the target distribution, the action $a = Q_{Y_t}$ is a distribution of $Y_t$ and the loss function $L_{\text{log}}(y, Q_{Y_t} ) = - \text{log}~Q_{Y_t} (y)$ is the negative log-likelihood of the target value $Y_t = y$.


\subsubsection{Logarithmic Loss (log loss)}
For log loss $L_{\mathrm{log}}(y, Q_{Y_t}) = -\log Q_{Y_t}(y)$, the $L$-entropy is the differential entropy \cite{cover1999elements, polyanskiy2014lecture}, defined as
\begin{align}
H_{\mathrm{log}}(Y_t) =-\int_{y \in \mathbb R} p_{Y_t}(y)~\mathrm{log}~p_{Y_t}(y)~\mathrm{d}\ \!y,
\end{align}
where $p_{Y_t}$ is the density function of the distribution $P_{Y_t}$ of $Y_t$. Because $Y_t$ is a Gaussian random variable with zero mean, one can obtain \cite{polyanskiy2014lecture}
\begin{align}\label{HYlog}
H_{\mathrm{log}}(Y_t)=\frac{1}{2}\mathrm{log}\left(2\pi e~\mathbb E[Y_t^2]\right).
\end{align}
The $L$-entropy for a discrete random variable associated with log loss is the well known Shannon entropy \cite{ShisherMobihoc, ShisherToN, farnia2016minimax}. The Shannon entropy is always non-negative. However, the differential entropy can be negative, positive, and zero \cite{polyanskiy2014lecture}.
\begin{proposition}\label{prop1}
The $L$-conditional entropy $H_{\mathrm{log}}(Y_t|\mathbf X^l_{t-\delta})$ is given by
 \begin{align}\label{H_log}
   H_{\mathrm{log}}(Y_t|\mathbf X^l_{t-\delta})=&\frac{1}{2}\mathrm{log}\left(\frac{\mathrm{det}(\mathbf R_{[Y_t, \mathbf X^l_{t-\delta}]})}{\mathrm{det}(\mathbf R_{\mathbf X^l_{t}})}\right)\nonumber\\
   &+\frac{1}{2}\mathrm{log} 2\pi e,
   \end{align}
where $\mathrm{det}(\mathbf A)$ denotes the determinant of a square matrix $\mathbf A$, 
\begin{align}\label{autocorr}
\mathbf R_{\mathbf X^l_t}=\mathbb E[(\mathbf X^l_{t})^T \mathbf X^l_{t}]
\end{align}
is an $l\times l$ dimensional auto-correlation matrix of a random vector $\mathbf X^l_{t}$, and   
\begin{align}
\mathbf R_{[Y_t, \mathbf X^l_{t-\delta}]}=\mathbb E\left[[Y_t, \mathbf X^l_{t-\delta}]^T[Y_t, \mathbf X^l_{t-\delta}]\right]
\end{align}
is an $(l+1)\times (l+1)$ dimensional auto-correlation matrix of a random vector $[Y_t,\mathbf X^l_{t-\delta}]=[Y_t, X_{t-\delta}, \ldots, X_{t-\delta-l+1}]$.
\end{proposition}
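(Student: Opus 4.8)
The plan is to reduce the $L$-conditional entropy under log loss to a standard conditional differential entropy of a jointly Gaussian vector, and then evaluate it via the chain rule together with the closed-form expression for the differential entropy of a Gaussian vector. First I would argue, exactly as in the unconditional case leading to \eqref{HYlog}, that for each fixed realization $\mathbf X^l_{t-\delta}=\mathbf x$ the expected log loss $\mathbb E_{Y_t \sim P_{Y_t|\mathbf X^l_{t-\delta}=\mathbf x}}[-\log Q_{Y_t}(Y_t)]$ in \eqref{LYx} is minimized by the true conditional density $Q_{Y_t}=p_{Y_t|\mathbf X^l_{t-\delta}=\mathbf x}$, by Gibbs' inequality (nonnegativity of the Kullback--Leibler divergence). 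Substituting this minimizer into \eqref{LYx} and averaging over $\mathbf x$ as in \eqref{LYX} identifies $H_{\mathrm{log}}(Y_t|\mathbf X^l_{t-\delta})$ with the conditional differential entropy, which I denote $h(Y_t|\mathbf X^l_{t-\delta})$.

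Next I would establish that $[Y_t,\mathbf X^l_{t-\delta}]$ is a zero-mean jointly Gaussian vector. Since $W_t$ and $N_t$ are i.i.d.\ zero-mean Gaussian, and every $X_s$ is, by unrolling the recursion \eqref{V} in the stationary regime, a convergent linear combination of past $W$'s, any finite collection of the $X_s$ is jointly Gaussian; appending $Y_t=X_t+N_t$ preserves joint Gaussianity and the zero mean. Consequently the covariance matrix of any such vector equals its autocorrelation matrix $\mathbf R$ as defined in \eqref{autocorr}, and in particular the conditional law of $Y_t$ given $\mathbf X^l_{t-\delta}=\mathbf x$ is Gaussian with a conditional variance that does not depend on $\mathbf x$.

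Then I would apply the chain rule $h(Y_t|\mathbf X^l_{t-\delta})=h([Y_t,\mathbf X^l_{t-\delta}])-h(\mathbf X^l_{t-\delta})$ together with the standard formula $h(\mathbf Z)=\tfrac{n}{2}\log(2\pi e)+\tfrac12\log\det(\mathbf R_{\mathbf Z})$ for an $n$-dimensional zero-mean Gaussian vector $\mathbf Z$. For the $(l+1)$-dimensional vector $[Y_t,\mathbf X^l_{t-\delta}]$ this yields $\tfrac{l+1}{2}\log(2\pi e)+\tfrac12\log\det(\mathbf R_{[Y_t,\mathbf X^l_{t-\delta}]})$, and for the $l$-dimensional vector $\mathbf X^l_{t-\delta}$ it yields $\tfrac{l}{2}\log(2\pi e)+\tfrac12\log\det(\mathbf R_{\mathbf X^l_{t-\delta}})$. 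Subtracting, the $\log(2\pi e)$ terms collapse to a single $\tfrac12\log 2\pi e$ and the determinants combine into the ratio appearing in \eqref{H_log}.

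The one point I would be most careful about, and the genuinely load-bearing step, is the replacement of $\mathbf R_{\mathbf X^l_{t-\delta}}$ by $\mathbf R_{\mathbf X^l_t}$ in the denominator. This is where the standing stationarity assumption on $\{(Y_t,X_t)\}$ enters: stationarity makes the autocorrelation matrix of a length-$l$ window of the $X$ process invariant under time shifts, so $\mathbf R_{\mathbf X^l_{t-\delta}}=\mathbf R_{\mathbf X^l_t}$ for every $\delta$. Without stationarity the denominator would depend on $\delta$ and the clean form \eqref{H_log} would not hold, so I would state this invariance explicitly rather than absorb it silently into the computation; the rest of the argument is the routine Gaussian entropy bookkeeping described above.
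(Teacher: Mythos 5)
Your proof is correct, and it reaches \eqref{H_log} by a different decomposition than the paper. The paper first introduces $L$-divergence and $L$-mutual information, writes $H_{\mathrm{log}}(Y_t|\mathbf X^l_{t-\delta})=H_{\mathrm{log}}(Y_t)-I_{\mathrm{log}}(Y_t;\mathbf X^l_{t-\delta})$, and then substitutes the Gaussian mutual-information determinant formula \eqref{I_log} together with \eqref{HYlog}; you instead identify $H_{\mathrm{log}}$ with conditional differential entropy via Gibbs' inequality and apply the chain rule $h(Y_t|\mathbf X^l_{t-\delta})=h([Y_t,\mathbf X^l_{t-\delta}])-h(\mathbf X^l_{t-\delta})$ with the Gaussian joint-entropy formula. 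The two computations are algebraically equivalent, since \eqref{I_log} is itself $h(\mathbf Y)+h(\mathbf X)-h([\mathbf X,\mathbf Y])$, so the difference is one of organization: the paper's route stays inside the $L$-information framework it builds anyway (the same formula \eqref{I_log} is reused verbatim in the proof of Proposition \ref{prop3}), while your route is more elementary and self-contained, needing only the chain rule and the Gaussian entropy formula. Two things in your write-up are genuinely more careful than the paper's proof: (i) you justify, via nonnegativity of the KL divergence, that the minimizer in \eqref{LYx} under log loss is the true conditional density, so that $H_{\mathrm{log}}(Y_t|\mathbf X^l_{t-\delta})$ really is the conditional differential entropy (the paper takes this identification from cited references without comment), and (ii) you flag that stationarity is what licenses writing the denominator as $\mathrm{det}(\mathbf R_{\mathbf X^l_{t}})$ rather than $\mathrm{det}(\mathbf R_{\mathbf X^l_{t-\delta}})$; the paper's proof performs this substitution silently, and it is indeed the only place where the standing stationarity assumption enters the derivation.
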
 
\begin{proof}
See Appendix \ref{prop1p}.
\end{proof}

In the special case of feature length $l=1$, from \eqref{H_log}, it can be shown that 
\begin{align}\label{H_log1}
&H_{\mathrm{log}}(Y_t|X_{t-\delta})\nonumber\\
&=\frac{1}{2}\bigg(\mathrm{log}\big(\mathbb E[Y_t^2]-\frac{\mathbb E[X_tX_{t-\delta}]^2}{\mathbb E[X_t^2]}\big)+\mathrm{log}2 \pi e\bigg).
\end{align}

\subsubsection{Quadratic Loss}
For quadratic loss function $L_2(y, \hat{y})= (y - \hat{y})^2$, the $L$-entropy of $Y_t$ is the variance of $Y_t$, given by 
\begin{align}
H_2(Y_t)= \sigma^2_{Y_t}.
\end{align}
Because $\mathbb E[Y_t]=0$, we have 
\begin{align}
H_2(Y_t)=\mathbb E[Y_t^2].
\end{align}
\begin{proposition}\label{prop2}
    The $L$-conditional entropy $H_{2}(Y_t|\mathbf X^l_{t-\delta})$ is given by 
   \begin{align}\label{H_2}
   \!\!\!\!\!\!\! H_{2}(Y_t|\mathbf X^l_{t-\delta})&=\mathbb E[(Y_t-\mathbb E[Y_t|\mathbf X^l_{t-\delta}])^2]\nonumber\\
   &=\mathbb E[Y_t^2]\!-\!\mathbb E[X_t \mathbf X^l_{t-\delta}](\mathbf R_{\mathbf X^l_{t}})^{-1} \mathbb E[X_t \mathbf X^l_{t-\delta}]^T,
   \end{align}
where $\mathbb E[X_t \mathbf X^l_{t-\delta}]=\big[\mathbb E[X_t X_{t-\delta}], \ldots, \mathbb E[X_t X_{t-\delta-l+1}]\big]$ is a $1\times l$ dimensional vector and $\mathbf R_{\mathbf X^l_{t}}$ is an $l\times l$ dimensional auto-correlation matrix of $\mathbf X^l_{t}$ defined in \eqref{autocorr}.
\end{proposition}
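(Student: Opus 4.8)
The plan is to establish the two equalities in \eqref{H_2} separately. For the first equality, I would start from the definition of the $L$-conditional entropy in \eqref{LYX}--\eqref{LYx} specialized to the quadratic loss $L_2(y,\hat y)=(y-\hat y)^2$. For each fixed realization $\mathbf X^l_{t-\delta}=\mathbf x$, the inner minimization $\min_{a}\mathbb E[(Y_t-a)^2\mid \mathbf X^l_{t-\delta}=\mathbf x]$ is a one-dimensional convex problem whose unique minimizer is the conditional mean $a^\star=\mathbb E[Y_t\mid \mathbf X^l_{t-\delta}=\mathbf x]$. Substituting $a^\star$ back into the objective and then taking the expectation over $\mathbf X^l_{t-\delta}$ as prescribed by \eqref{LYX} yields $H_2(Y_t\mid\mathbf X^l_{t-\delta})=\mathbb E[(Y_t-\mathbb E[Y_t\mid\mathbf X^l_{t-\delta}])^2]$, i.e.\ the minimum mean-squared error.

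For the second equality I would exploit the joint Gaussianity of $(Y_t,\mathbf X^l_{t-\delta})$. Since $X_t$ is generated by the linear recursion \eqref{V} driven by i.i.d.\ Gaussian $W_t$, every finite collection of the $X$'s is jointly Gaussian; adding the independent Gaussian noise $N_t$ in \eqref{Y} keeps $(Y_t,\mathbf X^l_{t-\delta})$ jointly Gaussian with zero mean. For zero-mean jointly Gaussian vectors the conditional expectation coincides with the linear least-squares estimator, so $\mathbb E[Y_t\mid\mathbf X^l_{t-\delta}]=\mathbb E[Y_t\mathbf X^l_{t-\delta}](\mathbf R_{\mathbf X^l_{t-\delta}})^{-1}(\mathbf X^l_{t-\delta})^T$. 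Two simplifications then apply: first, because $N_t$ is independent of the past observations, $\mathbb E[Y_t\mathbf X^l_{t-\delta}]=\mathbb E[X_t\mathbf X^l_{t-\delta}]$; second, by the assumed stationarity of $\{(Y_t,X_t)\}$ the autocorrelation matrix is shift-invariant, so $\mathbf R_{\mathbf X^l_{t-\delta}}=\mathbf R_{\mathbf X^l_t}$, matching the matrix appearing in \eqref{H_2}.

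With the linear estimator in hand, I would compute the MMSE using the orthogonality principle: the estimation error $Y_t-\mathbb E[Y_t\mid\mathbf X^l_{t-\delta}]$ is orthogonal to every component of $\mathbf X^l_{t-\delta}$, hence
\begin{align}
&\mathbb E[(Y_t-\mathbb E[Y_t\mid\mathbf X^l_{t-\delta}])^2]\nonumber\\
&=\mathbb E[Y_t^2]-\mathbb E[X_t\mathbf X^l_{t-\delta}](\mathbf R_{\mathbf X^l_t})^{-1}\mathbb E[X_t\mathbf X^l_{t-\delta}]^T,\nonumber
\end{align}
which is exactly \eqref{H_2}. The only genuinely delicate point is the linearity of the conditional mean: this relies on joint Gaussianity, so the main work is to argue carefully that the AR($p$) recursion together with the additive Gaussian observation noise makes $(Y_t,\mathbf X^l_{t-\delta})$ a zero-mean Gaussian vector. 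An alternative route, should one wish to avoid invoking Gaussian conditioning directly, is to start from Proposition \ref{prop1} and apply the Schur-complement (block-determinant) identity to the ratio $\det(\mathbf R_{[Y_t,\mathbf X^l_{t-\delta}]})/\det(\mathbf R_{\mathbf X^l_t})$, which produces the same quadratic form and, combined with the Gaussian identity $H_{\mathrm{log}}=\tfrac12\log(2\pi e\cdot\mathrm{MMSE})$, recovers \eqref{H_2}.
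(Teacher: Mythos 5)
Your proposal is correct and follows essentially the same route as the paper's own proof: the first equality from the fact that the conditional mean minimizes the quadratic loss, and the second from joint Gaussianity (linear conditional expectation), the identity $\mathbb E[Y_t \mathbf X^l_{t-\delta}]=\mathbb E[X_t \mathbf X^l_{t-\delta}]$ due to the independent noise $N_t$, and the orthogonality principle. Your explicit justifications of joint Gaussianity via the AR($p$) recursion and of $\mathbf R_{\mathbf X^l_{t-\delta}}=\mathbf R_{\mathbf X^l_t}$ via stationarity are points the paper leaves implicit, but the argument is the same.
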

\ifreport
\begin{proof}
See Appendix \ref{prop2p}.
\end{proof}
\else 
Due to space limitation, the proof of Proposition \ref{prop2} is relegated to our technical report \cite{technical_report}.
\fi

In the special case of feature length $l=1$, from \eqref{H_2}, it can be shown that 
\begin{align}\label{H_log1}
H_{2}(Y_t|X_{t-\delta})=\mathbb E[Y_t^2]-\frac{\mathbb E[X_t \mathbb X^l_{t-\delta}]^2}{\mathbb E[X_t^2]}.
\end{align}

By utilizing Propositions \ref{prop1}-\ref{prop2}, one can evaluate the $L$-conditional entropy of a data sequence that is generated using a Gaussian AR($p$) system. The $L$-conditional entropy for an AR($4$) model is depicted in Fig. \ref{fig:AR}. The model parameters of the AR($4$) model is presented in Section \ref{simulation}. Fig. \ref{fig:AR} reveals that the $L$-conditional entropy can be a non-monotonic function of AoI $\delta$.

\subsection{$L$-conditional Entropy vs. AoI}
If $Y_t \leftrightarrow \mathbf X^l_{t-\mu}  \leftrightarrow \mathbf X^l_{t-\mu-\nu}$ is a Markov chain for all $\mu,\nu\geq 0$, by the data processing inequality \cite[Lemma 12.1] {Dawid1998}, $H_L(Y_t|\mathbf X^l_{t-\delta})$ is a non-decreasing function of $\delta$. Nevertheless,  the results in Fig. \ref{fig:AR} show that the $L$-conditional entropy is not always a non-decreasing function of $\delta$. This is because $Y_t \leftrightarrow \mathbf X^l_{t-\mu}  \leftrightarrow \mathbf X^l_{t-\mu-\nu}$ is not a Markov chain for all $\mu,\nu\geq 0$, particularly when $l<p$ and $p \geq 2$ in AR($p$) process. 
A relaxation of the data processing inequality is needed to analyze how $H_L(Y_t| \mathbf X^l_{t-\delta})$ varies with $\delta$ for both Markovian and non-Markovian time-series data. To that end, an $\epsilon$-Markov chain model is proposed in \cite{ShisherMobihoc, ShisherToN}. 

\begin{definition}[\textbf{$\epsilon$-Markov Chain}]\label{epsilonMarkovChain}
Given $\epsilon \geq 0$, a sequence of three random variables $Y, X,$ and $Z$ is said to be an \emph{$\epsilon$-Markov chain}, denoted as $Y \overset{\epsilon} \leftrightarrow X \overset{\epsilon} \leftrightarrow Z$, if
\begin{align}\label{epsilon-Markov-def}
\!\!I_{\mathrm{log}}(Y;Z|X)\!= D_{\mathrm{log}}\!\left(P_{Y,X,Z}|| P_{Y|X} P_{Z|X} P_{X} \right)\! \leq \! \epsilon^2,
\end{align}
where 
$I_{\mathrm{log}}(Y;Z|X)$ is Shannon conditional mutual information, 
$D_{\mathrm{log}}\!\left(P_{Y,X,Z}|| P_{Y|X} P_{Z|X} P_{X} \right)$
is KL-divergence between two distributions $P_{Y,X,Z}$ and $P_{Y|X} P_{Z|X} P_{X}$, KL-divergence $D_{\mathrm{log}}(P_Y ||Q_Y)$ between two distributions $P_Y$ and $Q_Y$ is defined as
\begin{align}\label{chi-divergence-def}
D_{\mathrm{log}}(P_Y ||Q_Y):=\int_{y \in \mathcal Y} p(y)\ \log\left(\frac{p(y)}{q(y)}\right)\ \mathrm{d}\ \!y, 
\end{align}
$p$ and $q$ are the probability densities of $P_Y$ and $Q_Y$, respectively. 
\end{definition}
Notice that the KL-divergence in \eqref{epsilon-Markov-def} can be also equivalently expressed as
\begin{align}
&D_{\mathrm{log}}(P_{Y,X,Z} || P_{Y|X} P_{Z|X} P_X) \nonumber\\
=&\mathbb E_X [ D_{\mathrm{log}}(P_{Y,Z|X} || P_{Y|X} P_{Z|X})] \nonumber\\
=&\mathbb E_{X,Z} [ D_{\mathrm{log}}(P_{Y|X,Z} || P_{Y|X}) ],
\end{align}

\begin{lemma}\label{lemma2}
The following assertions are true:

(a) If $Y_t \overset{\epsilon}\leftrightarrow \mathbf X^l_{t-\mu} \overset{\epsilon}\leftrightarrow \mathbf X^l_{t-\mu-\nu}$ is an $\epsilon$-Markov chain for every $\mu, \nu \geq 0$, then the $L$-conditional entropy is given by
\begin{align}\label{eMarkov}
H_L(Y_t|\mathbf X^l_{t-\delta})= g_1(\delta)+O(\epsilon^2)
\end{align}
where $g_1(\delta)$ is a non-decreasing function of $\delta$, given by
\begin{align}\label{g12function}
\!\!g_1(\delta)=&H_L(Y_t | \mathbf X^l_t) + \sum_{k=0}^{\delta-1}~I_L(Y_t;\mathbf X^l_{t-k}  | \mathbf X^l_{t-k-1}),
\end{align}
the $L$-conditional mutual information $I_L(Y; X|Z)$ between $Y$ and $X$ given $Z$ is  
\begin{align}\label{CMI}
I_L(Y; X|Z)=H_L(Y | Z)-H_L(Y|X, Z).
\end{align}

(b) Given $\delta \geq 0$, $H_L(Y_t|\mathbf X^{l}_{t-\delta})$ is a non-increasing function of feature length $l$, i.e., for all $1 \leq l_1 \leq l_2$, 
\begin{align}
H_L(Y_t|\mathbf X^{l_1}_{t-\delta}) \geq H_L(Y_t|\mathbf X^{l_2}_{t-\delta}).
\end{align}
\end{lemma}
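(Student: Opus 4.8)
The plan is to prove part~(a) by telescoping the $L$-conditional entropy along the age axis, and part~(b) by the non-negativity of $L$-conditional mutual information (``conditioning cannot increase $L$-entropy''), which is the fundamental property underlying both claims.

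First, for part~(a), I would write the difference $H_L(Y_t|\mathbf X^l_{t-\delta}) - H_L(Y_t|\mathbf X^l_t)$ as the telescoping sum $\sum_{k=0}^{\delta-1}\big[H_L(Y_t|\mathbf X^l_{t-k-1}) - H_L(Y_t|\mathbf X^l_{t-k})\big]$. The key manipulation is to insert the joint conditioning on the pair $(\mathbf X^l_{t-k},\mathbf X^l_{t-k-1})$ as a pivot and apply the defining identity \eqref{CMI} in both ``directions'', which yields for each $k$
\[
H_L(Y_t|\mathbf X^l_{t-k-1}) - H_L(Y_t|\mathbf X^l_{t-k}) = I_L(Y_t;\mathbf X^l_{t-k}|\mathbf X^l_{t-k-1}) - I_L(Y_t;\mathbf X^l_{t-k-1}|\mathbf X^l_{t-k}).
\]
Summing over $k$ and recognizing the first sum as the increment appearing in \eqref{g12function} gives
\[
H_L(Y_t|\mathbf X^l_{t-\delta}) = g_1(\delta) - \sum_{k=0}^{\delta-1} I_L(Y_t;\mathbf X^l_{t-k-1}|\mathbf X^l_{t-k}),
\]
so the proof reduces to two claims: that $g_1$ is non-decreasing, and that the ``reverse'' sum is $O(\epsilon^2)$.

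The monotonicity of $g_1$ is immediate, since $g_1(\delta+1)-g_1(\delta) = I_L(Y_t;\mathbf X^l_{t-\delta}|\mathbf X^l_{t-\delta-1}) \geq 0$ by the non-negativity of $L$-conditional mutual information. For the error term, I would invoke the $\epsilon$-Markov hypothesis of Definition~\ref{epsilonMarkovChain} with $\mu = k$ and $\nu = 1$, which bounds the Shannon conditional mutual information $I_{\mathrm{log}}(Y_t;\mathbf X^l_{t-k-1}|\mathbf X^l_{t-k}) \leq \epsilon^2$ for every $k$, and then pass from $I_{\mathrm{log}}$ to the general $I_L$ using the quantitative comparison between $L$-divergence and KL-divergence established in \cite{ShisherMobihoc,ShisherToN}; since $\delta$ is fixed, summing the finitely many terms preserves the $O(\epsilon^2)$ order. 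I expect this last passage to be the main obstacle: for log loss $I_L = I_{\mathrm{log}}$ and the bound is immediate, but for a general loss (e.g.\ quadratic) controlling $I_L$ by $\epsilon^2$ is not an elementary manipulation and must rely on a Bregman/local-expansion comparison rather than on the data-processing inequality itself.

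Finally, for part~(b) the argument is short. Because $l_1 \le l_2$, the shorter feature vector $\mathbf X^{l_1}_{t-\delta}$ comprises the first $l_1$ coordinates of $\mathbf X^{l_2}_{t-\delta}$, so writing $\mathbf X^{l_2}_{t-\delta} = (\mathbf X^{l_1}_{t-\delta}, \mathbf Z)$ with $\mathbf Z$ the additional coordinates and applying \eqref{CMI} gives $H_L(Y_t|\mathbf X^{l_1}_{t-\delta}) - H_L(Y_t|\mathbf X^{l_2}_{t-\delta}) = I_L(Y_t;\mathbf Z|\mathbf X^{l_1}_{t-\delta}) \ge 0$, which is exactly the desired inequality by the same non-negativity of $L$-conditional mutual information used in part~(a).
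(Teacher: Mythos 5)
A preliminary observation: the paper never actually proves Lemma \ref{lemma2} — it only remarks that the lemma was introduced in \cite{ShisherMobihoc, shisher2023learning} for discrete random variables and is restated here for continuous ones — so your attempt must be judged against the argument in those cited works and against what the tools invoked there can deliver. Your skeleton matches that argument. The telescoping decomposition
\begin{align*}
H_L(Y_t|\mathbf X^l_{t-\delta}) = g_1(\delta) - \sum_{k=0}^{\delta-1} I_L(Y_t;\mathbf X^l_{t-k-1}\,|\,\mathbf X^l_{t-k})
\end{align*}
is exactly the right identity (your per-step manipulation is a correct double application of \eqref{CMI}), the monotonicity of $g_1$ via $I_L\ge 0$ is correct, and your part (b) — writing $\mathbf X^{l_2}_{t-\delta}=(\mathbf X^{l_1}_{t-\delta},\mathbf Z)$ and invoking $I_L(Y_t;\mathbf Z|\mathbf X^{l_1}_{t-\delta})\ge 0$ — is complete and is the standard proof.

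The genuine gap is the one you flagged yourself, and it is not cosmetic: it determines the order of the error term. Definition \ref{epsilonMarkovChain} with $\mu=k$, $\nu=1$ bounds the Shannon quantity $I_{\mathrm{log}}(Y_t;\mathbf X^l_{t-k-1}|\mathbf X^l_{t-k})\le\epsilon^2$, but your reverse sum involves $I_L$ for a generic loss. The comparison supplied by \cite{ShisherMobihoc, ShisherToN} routes through total variation: Pinsker's inequality gives $\mathrm{TV}=O(\sqrt{I_{\mathrm{log}}})=O(\epsilon)$, and a first-order expansion of $D_L$ in total variation then yields $I_L=O(\epsilon)$ per term, \emph{not} $O(\epsilon^2)$; this is precisely the form in which the $\epsilon$-data-processing inequality appears in that prior work. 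Completed with those tools, your argument proves $H_L(Y_t|\mathbf X^l_{t-\delta})=g_1(\delta)+O(\epsilon)$ but does not reach the $O(\epsilon^2)$ claimed in \eqref{eMarkov}, and no generic comparison can: for the $0$-$1$ loss one can make $I_{\mathrm{log}}=\epsilon^2$ while $I_L$ is of order $\epsilon$, so the quadratic order fails for arbitrary $L$. It is immediate only for $L=L_{\mathrm{log}}$, where each term of the reverse sum is itself at most $\epsilon^2$ and the finite sum over $k\le\delta-1$ stays $O(\epsilon^2)$. For the quadratic loss it can be rescued, but only by exploiting the joint Gaussianity of the AR($p$) model rather than a Bregman-type expansion: for jointly Gaussian variables the conditional variances satisfy $\mathrm{Var}(Y_t|\mathbf X^l_{t-k},\mathbf X^l_{t-k-1})=\mathrm{Var}(Y_t|\mathbf X^l_{t-k})\,e^{-2I_{\mathrm{log}}}$, hence $I_2=\mathrm{Var}(Y_t|\mathbf X^l_{t-k})\left(1-e^{-2I_{\mathrm{log}}}\right)\le 2\,\mathrm{Var}(Y_t|\mathbf X^l_{t-k})\,\epsilon^2$. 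So to finish your proof as stated you must either (i) specialize to the log and quadratic losses in the paper's Gaussian setting and prove these two termwise bounds, or (ii) keep the generic-$L$ Pinsker route and weaken the conclusion to $O(\epsilon)$.
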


Lemma \ref{lemma2} was introduced in our earlier work \cite{ShisherMobihoc, shisher2023learning} for discrete random variables. To ensure completeness of the paper, we restate Lemma \ref{lemma2} for continuous random variables.

\begin{figure*}
  \centering
\begin{subfigure}[b]{0.75\columnwidth}
\includegraphics[width=\linewidth]{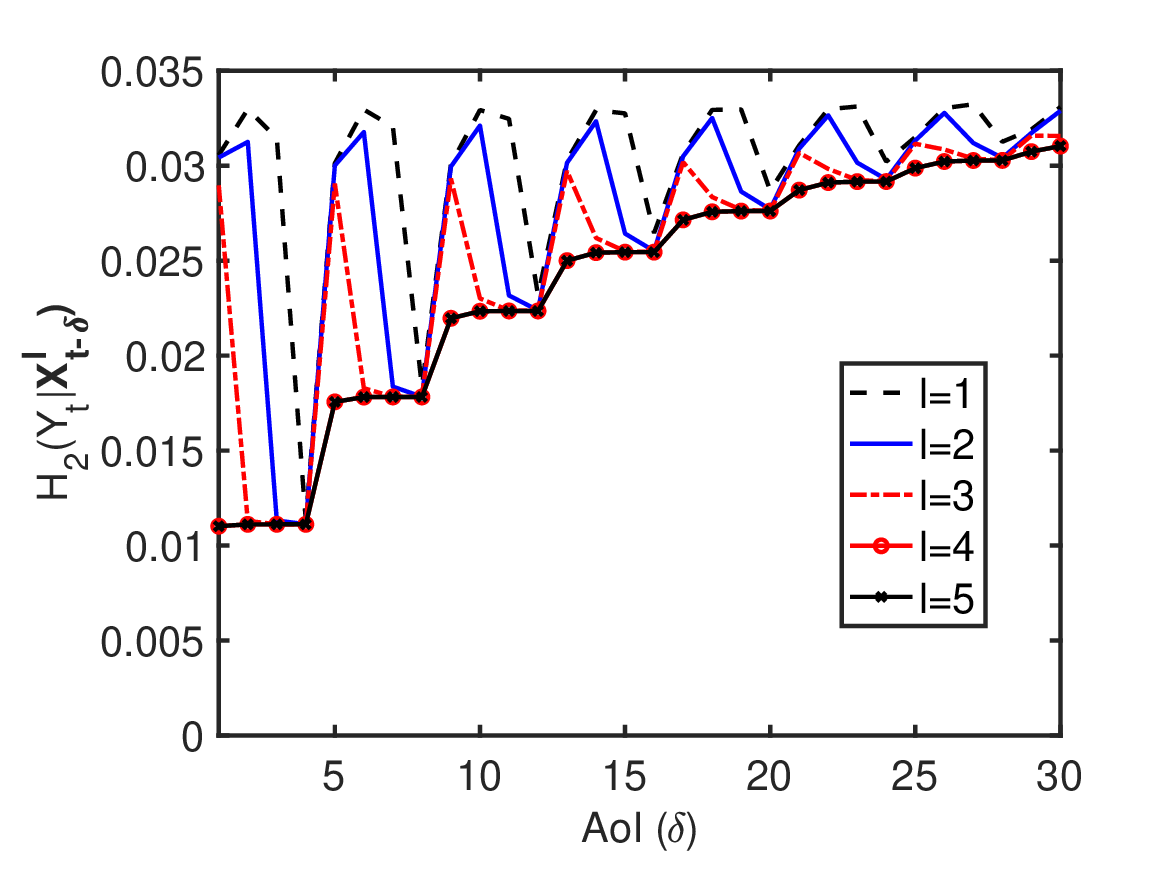}
  \subcaption{\small $H_{\mathrm{2}}(Y_t|\mathbf X^l_{t-\delta})$ vs. AoI ($\delta$)}
\end{subfigure}
 \hspace{20mm}
\begin{subfigure}[b]{0.75\columnwidth}
\includegraphics[width=\linewidth]
{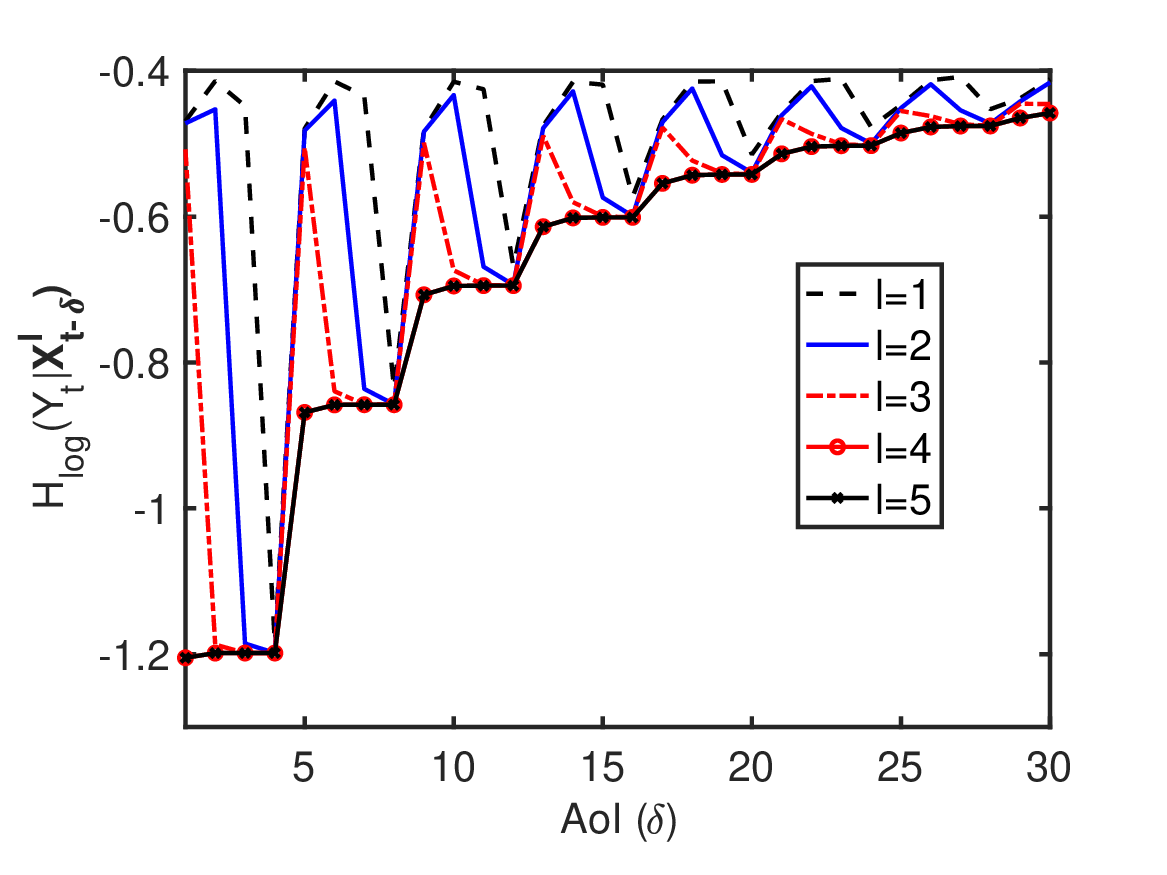}
  \subcaption{\small $H_{\mathrm{log}}(Y_t|\mathbf X^l_{t-\delta})$ vs. AoI ($\delta$)}
\end{subfigure}
%
\caption{$L$-conditional entropy vs. AoI with (a) quadratic loss function  and (b) log loss function (base 2). The $L$-conditional entropy is not always a monotonic function of AoI. An AR($4$) model as discussed in Section \ref{simulation} is considered for this simulation.}
\label{fig:AR}
\end{figure*}

According to Lemma \ref{lemma2}(a), the monotonicity of $H_L(Y_t|\mathbf X^l_{t-\delta})$ in $\delta$ is characterized by the parameter $\epsilon \geq 0$ in the $\epsilon$-Markov chain model. If $\epsilon$ is close to zero, then $Y_t \overset{\epsilon}\leftrightarrow \mathbf X^l_{t-\mu} \overset{\epsilon}\leftrightarrow \mathbf X^l_{t-\mu-\nu}$ is close to a Markov chain, and $H_L(Y_t|\mathbf X^l_{t-\delta})$ is non-decreasing in AoI $\delta$. If $\epsilon$ is large, then $Y_t \overset{\epsilon}\leftrightarrow \mathbf X^l_{t-\mu} \overset{\epsilon}\leftrightarrow \mathbf X^l_{t-\mu-\nu}$ is far from a Markov chain, and $H_L(Y_t|\mathbf X^l_{t-\delta})$ could be non-monotonic in AoI $\delta$. 

Lemma \ref{lemma2}(b) states that $H_L(Y_t|\mathbf X^l_{t-\delta})$ decreases with increasing feature length $l$. A longer feature sequence adds more information that results in better estimation. Nevertheless, increasing the feature length also increases data size, necessitating more communication resources. For example, a longer feature sequence may require a longer transmission time and may end up being stale when delivered, thus resulting in worse inference performance. Recently, a study \cite{shisher2023learning} has investigated a learning and communications co-design problem that jointly optimizes the timeliness and length of feature sequences.

\section{Characterizing the Parameter $\epsilon$ of An $\epsilon$-Markov Chain}
In this section, we show how to evaluate the value of the parameter $\epsilon$ from an AR(p) process. We also analyzed the impact of feature length $l$ on the parameter $\epsilon$.

The parameter $\epsilon$ in $Y_t \overset{\epsilon}\leftrightarrow \mathbf X^l_{t-\mu} \overset{\epsilon}\leftrightarrow \mathbf X^l_{t-\mu-\nu}$ depends on $\mu, \nu,$ and $l$. We denote $\epsilon_{\mu, \nu}(l)$ as the minimum value of $\epsilon$ for which $Y_t \overset{\epsilon}\leftrightarrow \mathbf X^l_{t-\mu} \overset{\epsilon}\leftrightarrow \mathbf X^l_{t-\mu-\nu}$ is an $\epsilon$-Markov chain. By using Definition \ref{epsilonMarkovChain}, we have 
\begin{align}\label{emn}
\epsilon_{\mu, \nu}(l)=\sqrt{I_{\mathrm{log}}(Y_t;\mathbf X^l_{t-\mu-\nu}|\mathbf X^l_{t-\mu})}.
\end{align}

We also denote $\epsilon(l)$ as the minimum value of $\epsilon$ for which $Y_t \overset{\epsilon}\leftrightarrow \mathbf X^l_{t-\mu} \overset{\epsilon}\leftrightarrow \mathbf X^l_{t-\mu-\nu}$ is an $\epsilon$-Markov chain for all $\mu, \nu \geq 0$. Then, we can write
\begin{align}
\epsilon(l)=\max_{\mu, \nu \geq 0} \epsilon_{\mu, \nu}(l).
\end{align}

\begin{proposition}\label{prop3}
    
The following assertions are true for the Gaussian AR($p$) model defined in \eqref{Y}-\eqref{V}.

(a) The minimum value of $\epsilon$ for which the data sequence $(Y_t, \mathbf X^l_{t-\mu}, \mathbf X^l_{t-\mu-\nu})$ satisfies an $\epsilon$-Markov chain property, i.e., $Y_t \overset{\epsilon}\leftrightarrow \mathbf X^l_{t-\mu} \overset{\epsilon}\leftrightarrow \mathbf X^l_{t-\mu-\nu}$, for all $\mu, \nu \geq 0$ is given by 
\begin{align}\label{epsilonl}
\epsilon(l)=\max_{\mu, \nu \geq 0} \epsilon_{\mu, \nu}(l),
\end{align}
where $\epsilon_{\mu, \nu}(l)$ is determined by
\begin{align}\label{epsilonmunu}
\epsilon_{\mu, \nu}(l)= \sqrt{\frac{1}{2} \mathrm{log}\left(\frac{\mathrm{det}(\mathbf R_{[\mathbf X^l_{t-\mu-\nu}, \mathbf X^l_{t-\mu}]})\mathrm{det}(\mathbf R_{[Y_t, \mathbf X^l_{t-\mu}]})}{\mathrm{det}(\mathbf R_{\mathbf X^l_{t-\mu}})\mathrm{det}(\mathbf R_{[Y_t, \mathbf X^l_{t-\mu-\nu}, \mathbf X^l_{t-\mu}]})}\right)},
\end{align} 
$\mathrm{det}(\mathbf A)$ denotes the determinant of a square matrix $\mathbf A$, and $\mathbf R_{\mathbf X}=\mathbb E[\mathbf X^T \mathbf X]$ is the auto-correlation matrix of a random vector $\mathbf X$. 

(b) If $l \geq p$, then
\begin{align}
\epsilon(l)=0.
\end{align}
\end{proposition}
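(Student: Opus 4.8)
The plan is to handle the two parts separately: part (a) follows from the Gaussian closed form for differential entropy, while part (b) follows from the exact $p$-th order Markov property of the AR($p$) recursion.

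For part (a), I would start from \eqref{emn}, so that it suffices to evaluate the Shannon conditional mutual information $I_{\mathrm{log}}(Y_t;\mathbf X^l_{t-\mu-\nu}\mid\mathbf X^l_{t-\mu})$. First I would observe that, since $\{X_t\}$ is a Gaussian AR($p$) process driven by i.i.d. Gaussian innovations and $Y_t=X_t+N_t$ with an independent Gaussian $N_t$, the stacked vector $(Y_t,\mathbf X^l_{t-\mu},\mathbf X^l_{t-\mu-\nu})$ is jointly zero-mean Gaussian; hence every sub-vector is Gaussian and its differential entropy obeys the closed form $H_{\mathrm{log}}(\mathbf Z)=\tfrac12\log\!\big((2\pi e)^{n}\det\mathbf R_{\mathbf Z}\big)$ for $\mathbf Z\in\mathbb R^{n}$, the multivariate generalization of \eqref{HYlog} (the mean being zero, the covariance equals the auto-correlation $\mathbf R_{\mathbf Z}$). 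Next I would expand the conditional mutual information as $I_{\mathrm{log}}(Y_t;\mathbf X^l_{t-\mu-\nu}\mid\mathbf X^l_{t-\mu})=H_{\mathrm{log}}(Y_t,\mathbf X^l_{t-\mu})+H_{\mathrm{log}}(\mathbf X^l_{t-\mu-\nu},\mathbf X^l_{t-\mu})-H_{\mathrm{log}}(\mathbf X^l_{t-\mu})-H_{\mathrm{log}}(Y_t,\mathbf X^l_{t-\mu-\nu},\mathbf X^l_{t-\mu})$ and substitute the closed form into each term. A short dimension count shows the exponents of the $(2\pi e)$ normalizers add to $(1+l)+2l-l-(1+2l)=0$, so those factors cancel and only the determinant ratio inside \eqref{epsilonmunu} survives; taking the square root gives \eqref{epsilonmunu}, and \eqref{epsilonl} is then just the definition of $\epsilon(l)$.

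For part (b), the goal is to show that when $l\geq p$ the triple becomes an exact Markov chain $Y_t\leftrightarrow\mathbf X^l_{t-\mu}\leftrightarrow\mathbf X^l_{t-\mu-\nu}$ for every $\mu,\nu\geq0$, so that each conditional mutual information vanishes and thus $\epsilon_{\mu,\nu}(l)=0$. The lever is the $p$-th order Markov structure of \eqref{V}: when $l\geq p$ the block $\mathbf X^l_{t-\mu}$ contains the full state $(X_{t-\mu},\dots,X_{t-\mu-p+1})$. I would unroll \eqref{V} forward from time $t-\mu$ to time $t$ to write $X_t$ as a linear function of this state plus the fresh innovations $W_{t-\mu+1},\dots,W_t$; these innovations, together with $N_t$, are independent of all variables at times $\leq t-\mu$ and hence of $\mathbf X^l_{t-\mu-\nu}$. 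Consequently $p(Y_t\mid\mathbf X^l_{t-\mu},\mathbf X^l_{t-\mu-\nu})=p(Y_t\mid\mathbf X^l_{t-\mu})$, giving $I_{\mathrm{log}}(Y_t;\mathbf X^l_{t-\mu-\nu}\mid\mathbf X^l_{t-\mu})=0$ for all $\mu,\nu\geq0$ and therefore $\epsilon(l)=\max_{\mu,\nu\geq0}\epsilon_{\mu,\nu}(l)=0$.

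I expect the main obstacle to be the conditional-independence argument in part (b) when $\mu>0$, because then the target time $t$ lies strictly after the conditioning window $\mathbf X^l_{t-\mu}$, and a naive appeal to ``conditioning on $p$ consecutive samples separates past from future'' does not immediately cover a target outside the window. The clean resolution is the forward innovation argument above: expressing $X_t$ through the state at time $t-\mu$ and independent future innovations makes the conditional independence manifest regardless of $\mu$. By contrast, part (a) is essentially bookkeeping once joint Gaussianity is noted, the only care being the dimension count that cancels the $(2\pi e)$ factors.
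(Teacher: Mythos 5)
Your proposal is correct, and it tracks the paper's own proof closely, with one notable improvement. For part (a), the paper writes $I_{\mathrm{log}}(Y_t;\mathbf X^l_{t-\mu-\nu}|\mathbf X^l_{t-\mu})$ as a difference of two mutual informations, $I_{\mathrm{log}}(Y_t;\mathbf X^l_{t-\mu-\nu},\mathbf X^l_{t-\mu})-I_{\mathrm{log}}(Y_t;\mathbf X^l_{t-\mu})$, and applies the Gaussian mutual-information determinant formula \eqref{I_log} to each term; you instead expand it into four joint differential entropies and apply the Gaussian entropy closed form to each, with your dimension count cancelling the $(2\pi e)$ normalizers. These are the same computation under different bookkeeping: both hinge on joint Gaussianity and zero mean (so covariances equal auto-correlations) and produce the identical determinant ratio in \eqref{epsilonmunu}. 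For part (b), both you and the paper establish that $Y_t\leftrightarrow\mathbf X^l_{t-\mu}\leftrightarrow\mathbf X^l_{t-\mu-\nu}$ is an exact Markov chain when $l\geq p$, so the conditional mutual information vanishes, $\epsilon_{\mu,\nu}(l)=0$ for all $\mu,\nu\geq 0$, and hence $\epsilon(l)=0$. However, your justification is sounder than the paper's: the paper asserts that, by \eqref{V}, one can express $\mathbf X^l_{t-\mu}$ as a function of $\mathbf X^l_{t-\mu-\nu}$, which taken literally is false (the unrolled recursion also involves the innovations $W_{t-\mu-\nu+1},\ldots,W_{t-\mu}$), and which, even if it were true, would support the wrong conclusion --- it would yield $H_L(Y_t|\mathbf X^l_{t-\mu},\mathbf X^l_{t-\mu-\nu})=H_L(Y_t|\mathbf X^l_{t-\mu-\nu})$ rather than the needed equality $H_L(Y_t|\mathbf X^l_{t-\mu},\mathbf X^l_{t-\mu-\nu})=H_L(Y_t|\mathbf X^l_{t-\mu})$ in \eqref{p3b-1}. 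Your forward-innovation argument --- writing $X_t$ as a linear function of the state $(X_{t-\mu},\ldots,X_{t-\mu-p+1})$, which is contained in $\mathbf X^l_{t-\mu}$ when $l\geq p$, plus the innovations $W_{t-\mu+1},\ldots,W_t$ and the noise $N_t$, all independent of every variable at times at or before $t-\mu$ and hence of $\mathbf X^l_{t-\mu-\nu}$ --- is exactly the right way to obtain the conditional independence for every $\mu,\nu\geq 0$, including the delicate case $\mu>0$ that you flagged, and it repairs this step of the paper's argument.
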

\ifreport
\begin{proof}
See Appendix \ref{prop3p}
\end{proof}
\else 
Due to space limitation, the proof of Proposition \ref{prop3} is relegated to our technical report \cite{technical_report}.
\fi

In Proposition \ref{prop3}(a), we present a closed-form expression for computing the parameter $\epsilon(l)$. Utilizing Proposition \ref{prop3}(a), one can derive $\epsilon(l)$ from the auto-correlation function of a data sequence generated from the AR(p) model. Proposition \ref{prop3}(b) implies that if the feature length $l$ is greater than or equal to the order $p$ of the AR($p$) model, then $\epsilon(l)$ equals $0$. By integrating Proposition \ref{prop3}(b) with Lemma \ref{lemma2}, we can conclude that if the feature length $l$ is greater than or equal to the order $p$, the $L$-conditional entropy becomes a non-decreasing function of AoI. However, transmitting longer features demands more communication resources \cite{shisher2023learning}.

\section{Numerical Results}\label{simulation}

In this section, we utilize Proposition \ref{prop1}-\ref{prop3} to compute the estimation error and the parameter $\epsilon(l)$ for the following AR($4$) process:
\begin{align}\label{sX}
X_{t}&=0.1 X_{t-1}+0.8 X_{t-p}+W_t,\\\label{sY}
Y_t&=X_t+N_t,
\end{align}
where $W_t \in \mathbb R$ and $N_t \in \mathbb R$ are {i.i.d.} Gaussian noises over time with zero mean and variances $0.01$ and $0.001$, respectively. The goal is to estimate $Y_t$ using a feature sequence $\mathbf X^l_{t-\delta}=[X_{t-\delta}, X_{t-\delta-1}, \ldots, X_{t-\delta-l+1}]$. 
\subsection{Evaluating $L$-conditional Entropy Using Propositions \ref{prop1}-\ref{prop2}}
We compute the $L$-conditional entropy of $Y_t$ given $\mathbf X^l_{t-\delta}$ for two different loss functions: (a) quadratic loss and (b) log loss, using \eqref{H_2} and \eqref{H_log}, respectively. In Fig. \ref{fig:AR}, we illustrate the $L$-conditional entropy $H_{2}(Y_t|\mathbf X^l_{t-\delta})$ associated with quadratic loss and the $L$-conditional entropy $H_{\mathrm{log}}(Y_t|\mathbf X^l_{t-\delta})$ associated with log loss (base 2). Both $H_2(Y_t|\mathbf X^l_{t-\delta})$ and $H_{\mathrm{log}}(Y_t|\mathbf X^l_{t-\delta})$ exhibit similar behavior with respect to AoI $\delta$ and feature length $l$, but they are measured in different scale and are used in different applications.     
\subsection{Evaluating $\epsilon(l)$ Using Proposition \ref{prop3}}
We determine $\epsilon(l)$ through the following steps: Firstly, we calculate $\epsilon_{\mu, \nu}(l)$ using \eqref{epsilonmunu} given $\mu, \nu$, and $l$. Subsequently, we compute $\epsilon(l)$ by maximizing $\epsilon_{\mu, \nu}(l)$ over all $\mu, \nu \geq 0$. However, this needs to compute $\epsilon_{\mu, \nu}(l)$ for an infinite number of $\mu$ and $\nu$, which is not possible. We find that when $\mu$ or $\nu$ exceed a large value, $\epsilon_{\mu, \nu}(l)$ becomes either $0$ or close to $0$ for all $l$. Therefore, we can choose an upper bound denoted as $M$ and compute $\epsilon(l)$ by maximizing $\epsilon_{\mu, \nu}(l)$ over all $0\leq \mu, \nu \leq M$. In our simulation, we set $M=50$. The outcomes of $\epsilon(l)$ for feature length $l=1, 2, 3, 4, 5$ are presented in Table \ref{table:1}.

\begin{table}[h!]
\centering
\resizebox{8cm}{!}{
\begin{tabular}{ |c|c|c|c|c|c| } 
\hline
Feature length $l$ & 1 & 2 & 3 & 4 & 5  \\
\hline
$\epsilon(l)$ & 1.55 & 1.49 & 1.39 & 0 & 0 \\ 
\hline
\end{tabular}}
\caption{$\epsilon(l)$ for $l=1, 2, 3, 4, 5$.}\label{table:1}
\end{table}

\subsection{Analysis of the Numerical Results}
Fig. \ref{fig:AR} and Table \ref{table:1} illustrate that as the feature length $l$ increases, the parameter $\epsilon(l)$ tends to zero, and the $L$-conditional entropy becomes a monotonic function of AoI $\delta$. Specifically, when the feature length $l$ reaches the order $p$ of the AR($p$) process, the parameter $\epsilon(l)$ equals zero and hence, the $L$-conditional entropy becomes a monotonic function of AoI. Moreover, as the feature length $l$ increases, the $L$-conditional entropy reduces. However, beyond the order $p$, further increases in feature length do not result in the reduction of the $L$-conditional entropy. It is evident from Fig. \ref{fig:AR} that the $L$-conditional entropy for $l=4$ and $l=5$ remains the same for the AR($4$) model.

\section{Conclusion}
This paper investigates the impact of information freshness on the remote estimation of AR($p$) processes. Employing a new $\epsilon$-Markov chain model, we demonstrate that the estimation error does not always degrade monotonically as the observations become stale. We provide closed-form expressions for computing both the estimation error and the parameter $\epsilon$ for AR($p$) processes. Both theoretical analyses and numerical results illustrate that, with an increasing feature length, $\epsilon$ converges to zero and the estimation error converges to a non-decreasing function of AoI.

\begin{appendices}
\section{Proof of Lemma \ref{lemma1}}\label{plemma1}
By using \eqref{LY}-\eqref{LYx}, we can obtain from \eqref{instantaneous_err1} that
\begin{align}\label{l1-1}
&\mathrm{err}_{\mathrm{estimation}}(\delta, l)\nonumber\\
&=\min_{\phi \in \Phi} \mathbb E_{Y, \mathbf X^l \sim P_{Y_t, \mathbf X^{l}_{t-\delta}}}\bigg[L\bigg(Y,\phi\bigg(\mathbf X^l, \delta\bigg)\bigg)\bigg]\nonumber\\
&=\mathbb E_{\mathbf x^l\sim P_{\mathbf X^l_{t-\delta}}}\left[ \min_{\phi(\mathbf \mathbf x^l, \delta) \in \mathcal A} \mathbb E_{Y \sim P_{Y_t|\mathbf X^{l}_{t-\delta}=\mathbf x^l}}\bigg[L\bigg(Y,\phi\bigg(\mathbf x^l, \delta\bigg)\bigg)\bigg]\right]\nonumber\\
&=\mathbb E_{\mathbf x^l\sim P_{\mathbf X^l_{t-\delta}}}\left[ \min_{a \in \mathcal A} \mathbb E_{Y \sim P_{Y_t|\mathbf X^{l}_{t-\delta}=\mathbf x^l}}\bigg[L(Y,a)\bigg]\right]\nonumber\\
&=\mathbb E_{\mathbf x^l\sim P_{\mathbf X^l_{t-\delta}}}\left[H_L(Y_t|\mathbf X^l_{t-\delta}=\mathbf x^l)\right]\nonumber\\
&=H_L(Y_t|\mathbf X^l_{t-\delta}),
\end{align}
where the second equality holds because $\Phi$ contains all functions that map from $\mathbb R^l \times \mathbb Z^{+}$ to $\mathcal A$.

\section{Proof of Proposition \ref{prop1}}\label{prop1p}
We begin with the definitions of $L$-\emph{divergence} and $L$-\emph{mutual information}. The $L$-divergence $D_L(P_{Y} || Q_{Y})$ of $P_{Y}$ from $Q_{Y}$ can be expressed as \cite{ShisherMobihoc, Dawid2004, farnia2016minimax}
\begin{align}\label{divergence}
&D_L(P_{Y} || Q_{Y}) \nonumber\\
 &=\!\mathbb E_{Y \sim P_{Y}}\left[L\left(Y, a_{P_{Y}}\right)\right]-\mathbb E_{Y \sim P_{Y}}\left[L\left(Y, a_{Q_{Y}}\right)\right],
\end{align}
where $a_{P_Y}$ is the optimal solution to 
\begin{align}
\min_{a \in \mathcal A} \mathbb E_{Y\sim P_Y}[L(Y, a)].
\end{align}
The \emph{$L$-mutual information} $I_L(Y;X)$ is defined as \cite{ShisherMobihoc,Dawid2004, farnia2016minimax}
\begin{align}\label{MI}
I_L(Y; X)=& \mathbb E_{X \sim P_{X}}\left[D_L\left(P_{Y|X}||P_{Y}\right)\right]\nonumber\\
=&H_L(Y)-H_L(Y|X) \geq 0,
\end{align}
which measures the performance gain in estimating $Y$ by observing $X$. The $L$-conditional mutual information $I_L(Y; X | Z)$ is given by 
\begin{align}\label{CMI}
I_L(Y; X|Z)=& \mathbb E_{X, Z \sim P_{X, Z}}\left[D_L\left(P_{Y|X, Z}||P_{Y | Z}\right)\right]\nonumber\\
=&H_L(Y | Z)-H_L(Y|X, Z) \geq 0.
\end{align}

Using \eqref{MI}, the $L$-conditional entropy $H_{\mathrm{log}}(Y_t|\mathbf X^l_{t-\delta})$ associated with log loss can be expressed as 
\begin{align}\label{p1}
H_{\mathrm{log}}(Y_t|\mathbf X^l_{t-\delta})=H_{\mathrm{log}}(Y_t)-I_{\mathrm{log}}(Y_t; \mathbf X^l_{t-\delta}).
\end{align} 
For jointly Gaussian random vectors $\mathbf Y \in \mathbb R^m$ and $\mathbf X \in \mathbb R^n$, we can obtain \cite{polyanskiy2014lecture}
\begin{align}\label{I_log}
I_{\mathrm{log}}(\mathbf Y;\mathbf X)=\frac{1}{2} \mathrm{log} \frac{\mathrm {det}(\Sigma_{\mathbf X})\mathrm {det}(\Sigma_{\mathbf Y})}{\mathrm{det}(\Sigma_{[\mathbf X, \mathbf Y]})},
\end{align}
where $\Sigma_{\mathbf X}:=\mathbb E[(\mathbf X-\mathbb E[\mathbf X])]^T\mathbb E[(\mathbf X-\mathbb E[\mathbf X])]$ denotes the covariance matrix of the row vector $\mathbf X$. If $\mathbb E[\mathbf X]=0$, then $\Sigma_{\mathbf X}=\mathbf R_{\mathbf X}$. 

By using $\mathbb E[Y_t]=0$, $\mathbb E[\mathbf X^l_{t-\delta}]=0$, \eqref{HYlog}, \eqref{p1}, and \eqref{I_log}, we obtain \eqref{H_log}.

\ifreport
\section{Proof of Proposition \ref{prop2}}\label{prop2p}
The conditional expectation of $Y_t$ given $\mathbf X^l_{t-\delta}=\mathbf x^l$, i.e., $\mathbb E[Y_t|\mathbf X^l_{t-\delta}=\mathbf x^l]$ is the optimal estimator of 
\begin{align}
\min_{\phi(\mathbf x^l, \delta) \in \mathbb R} \mathbb E_{Y \sim P_{Y_t|\mathbf X^{l}_{t-\delta}=x^l}}\bigg[(Y,\phi(\mathbf x^l, \delta))^2\bigg].
\end{align}
By substituting $L(y, \phi(\mathbf x^l, \delta))=(y-\phi(\mathbf x^l, \delta))^2$ and $\phi(\mathbf x^l, \delta)=\mathbb E[Y_t|\mathbf X^l_{t-\delta}=\mathbf x^l]$ into \eqref{l1-1}, we obtain
\begin{align}\label{p2-1}
H_2(Y_t|\mathbf X^l_{t-\delta})=\mathbb E[(Y_t-\mathbb E[Y_t|\mathbf X^l_{t-\delta}])^2].
\end{align}

Since $Y_t$ and $\mathbf X^l_{t-\delta}$ are jointly Gaussian with $\mathbb E[Y_t]=0$, $\mathbb E[\mathbf X^l_{t-\delta}]=0$, and
\begin{align}
\mathbb E[Y_t X_{t-k}]=\mathbb E[X_t X_{t-k}]+\mathbb E[N_t X_{t-k}]
=\mathbb E[X_t X_{t-k}],
\end{align}
we get \cite[Chapter 7.3]{papoulis2002probability}
\begin{align}\label{p2-2}
\mathbb E[Y_t|\mathbf X^l_{t-\delta}=\mathbf x^l]=\mathbf A(\mathbf x^l)^T,
\end{align}
where 
\begin{align}\label{p2-3}
\mathbf A=\mathbb E[X_t \mathbf X^l_{t-\delta}] (\mathbf R_{\mathbf X^l_{t}})^{-1}.
\end{align}
By using orthogonality principle \cite[Chapter 7.3]{papoulis2002probability}, we get 
\begin{align}\label{p2-5}
    \mathbb E[(Y_t-\mathbf A(\mathbf X^l_{t-\delta})^T)\mathbf X^l_{t-\delta}]=0 
\end{align} 
Now, using \eqref{p2-2} and \eqref{p2-5}, we obtain from \eqref{p2-1} that 
\begin{align}\label{p2-4}
H_2(Y_t|\mathbf X^l_{t-\delta})&=\mathbb E[(Y_t-\mathbb E[Y_t|\mathbf X^l_{t-\delta}])^2]\nonumber\\
&=\mathbb E[(Y_t-\mathbf A(\mathbf X^l_{t-\delta})^T)Y_t]\nonumber\\
&=\mathbb E[Y_t^2]-\mathbf A (\mathbb E[Y_t \mathbf X^l_{t-\delta}])^T\nonumber\\
&=\mathbb E[Y_t^2]-\mathbf A (\mathbb E[X_t \mathbf X^l_{t-\delta}])^T.
\end{align}
By substituting \eqref{p2-3} into \eqref{p2-4}, we obtain \eqref{H_2}.

\ignore{\section{Proof of Theorem \ref{theorem1}}\label{ptheorem1}
To prove Theorem \ref{theorem1}, we need the following lemma.
\begin{lemma}[\textbf{$\epsilon$-data processing inequality}] \label{Lemma_CMI}
If $Y \overset{\epsilon}\leftrightarrow X \overset{\epsilon}\leftrightarrow Z$ is an $\epsilon$-Markov chain, then 
\begin{align}
I_L(Y;Z|X)=O(\epsilon).
\end{align}
\end{lemma}
Lemma 1 is proved in \cite{shisherMobihoc, ShisherToN} for discrete random variables with finite state space. In this paper, we prove it for continuous random variables.

We have 
\begin{align}\label{CMI1}
&I_L(Y ; Z | X) \nonumber\\
=& \int_{x \in \mathcal X, z \in \mathcal Z} P_{X,Z}(x,z)D_L(P_{Y|X=z,Z=z} || P_{Y|X=z}) \mathrm{d}\ \!x\ \mathrm{d}\ \!z\,
\end{align}
Using Pinsker's inequality, we get 
\begin{align}\label{Pinsker}
    &\mathrm{TV}(P_{Y|X=x, Z=z}, P_{Y|X=x})\nonumber\\ 
    =&\frac{1}{2} \int_{y \in \mathcal Y}|P_{Y|X=x, Z=z}(y)-P_{Y|X=x}(y)|  \mathrm{d}\ \!y\ \nonumber\\
    &\leq \sqrt{\frac{1}{2 \mathrm{log e}}D_{\mathrm{log}}(P_{Y|X=x, Z=z}||P_{Y|X=x})}.
\end{align}
Because $Y \overset{\epsilon}\leftrightarrow X \overset{\epsilon}\leftrightarrow Z$ is an $\epsilon$-Markov chain, we have 
\begin{align}
\int_{x \in \mathcal X, z \in \mathcal Z} D_{\mathrm{log}}(P_{Y|X=x, Z=z}||P_{Y|X=x}) \mathrm{d}\ \!x\ \mathrm{d}\ \!z\ \leq \epsilon^2,
\end{align}
which along with \eqref{Pinsker} yields
\begin{align}
\int_{x \in \mathcal X, z \in \mathcal Z}\mathrm{TV}(P_{Y|X=x, Z=z}, P_{Y|X=x}) \mathrm{d}\ \!x\ \mathrm{d}\ \!z\ =O(\epsilon).
\end{align}
Finally, by using Taylor series expansion of \eqref{CMI1}, we have 
\begin{align}
&I_L(Y ; Z | X)\nonumber\\
&=O\left(\int_{x \in \mathcal X, z \in \mathcal Z}\mathrm{TV}(P_{Y|X=x, Z=z}, P_{Y|X=x}) \mathrm{d}\ \!x\ \mathrm{d}\ \!z\right)\nonumber\\
&=O(\epsilon).
\end{align}}

\section{Proof of Proposition \ref{prop3}}\label{prop3p}
Part (a): The Shannon's conditional mutual information $I_{\mathrm{log}}(Y_t;\mathbf X^l_{t-\mu-\nu}| \mathbf X^l_{t-\mu})$ can be derived as follows: 
\begin{align}\label{p3-1}
&I_{\mathrm{log}}(Y_t;\mathbf X^l_{t-\mu-\nu}| \mathbf X^l_{t-\mu})\nonumber\\
\overset{(a)}{=}& H_{\mathrm{log}}(Y_t|\mathbf X^l_{t-\mu})-H_{\mathrm{log}}(Y_t|\mathbf X^l_{t-\mu}, \mathbf X^l_{t-\mu-\nu})\nonumber\\
=&H_{\mathrm{log}}(Y_t)-H_{\mathrm{log}}(Y_t|\mathbf X^l_{t-\mu}, \mathbf X^l_{t-\mu-\nu})\nonumber\\
&-H_{\mathrm{log}}(Y_t)+H_{\mathrm{log}}(Y_t|\mathbf X^l_{t-\mu})\nonumber\\
\overset{(b)}{=}&I_{\mathrm{log}}(Y_t;\mathbf X^l_{t-\mu-\nu},\mathbf X^l_{t-\mu})-I_{\mathrm{log}}(Y_t;\mathbf X^l_{t-\mu})\nonumber\\
\overset{(c)}{=}&\frac{1}{2}\mathrm{log}\left(\frac{\mathrm{det}(\Sigma_{[\mathbf X^l_{t-\mu-\nu}, \mathbf X^l_{t-\mu}]})\mathrm{det}(\Sigma_{Y_t})}{\mathrm{det}(\Sigma_{[Y_t, \mathbf X^l_{t-\mu-\nu}, \mathbf X^l_{t-\mu})}]}\right)\nonumber\\
&-\frac{1}{2}\mathrm{log}\left(\frac{\mathrm{det}(\Sigma_{\mathbf X^l_{t-\mu}})\mathrm{det}(\Sigma_{Y_t})}{\mathrm{det}(\Sigma_{[Y_t,\mathbf X^l_{t-\mu}]})}\right)\nonumber\\
=&\frac{1}{2} \mathrm{log}\left(\frac{\mathrm{det}(\Sigma_{[\mathbf X^l_{t-\mu-\nu}, \mathbf X^l_{t-\mu}]})\mathrm{det}(\Sigma_{[Y_t, \mathbf X^l_{t-\mu}]})}{\mathrm{det}(\Sigma_{\mathbf X^l_{t-\mu}})\mathrm{det}(\Sigma_{[Y_t, \mathbf X^l_{t-\mu-\nu}, \mathbf X^l_{t-\mu}]})}\right)\nonumber\\
\overset{(d)}{=}&\frac{1}{2} \mathrm{log}\left(\frac{\mathrm{det}(\mathbf R_{[\mathbf X^l_{t-\mu-\nu}, \mathbf X^l_{t-\mu}]})\mathrm{det}(\mathbf R_{[Y_t, \mathbf X^l_{t-\mu}]})}{\mathrm{det}(\mathbf R_{\mathbf X^l_{t-\mu}})\mathrm{det}(\mathbf R_{[Y_t, \mathbf X^l_{t-\mu-\nu}, \mathbf X^l_{t-\mu}]})}\right),
\end{align}
where (a), (b), and (c) hold due to \eqref{CMI}, \eqref{MI}, and \eqref{I_log}, respectively and (d) holds because of $\mathbb E[Y_t]=0$ and $\mathbb E[X_{t-k}]=0$ for all $k$.

Now, by substituting \eqref{p3-1} into \eqref{emn}, we get \eqref{epsilonmunu}.

Part (b): If $l \geq p$, by using \eqref{V}, we can express $\mathbf X^l_{t-\mu}$ as a function of $\mathbf X^l_{t-\mu-\nu}$ for any $\mu, \nu \geq 0$. Hence, if $l \geq p$, then
\begin{align}\label{p3b-1}
H_L(Y_t|\mathbf X^l_{t-\mu}, \mathbf X^l_{t-\mu-\nu})=H_L(Y_t|\mathbf X^l_{t-\mu}).
\end{align}
By using \eqref{CMI} and \eqref{p3b-1}, we get that if $l \geq p$, then
\begin{align}\label{p3b-2}
I_{L}(Y_t;\mathbf X^l_{t-\mu-\nu}|\mathbf X^l_{t-\mu})=0.
\end{align}
From \eqref{epsilonmunu} and \eqref{p3b-2}, we obtain that if $l \geq p$, then $\epsilon_{\mu, \nu}(l)=0$ for all $\mu, \nu \geq 0$. Thus, $\epsilon(l)=0$. This concludes the proof.
\else

\fi
\end{appendices}
\bibliographystyle{IEEEtran}
\bibliography{refshisher}

\begin{thebibliography}{10}
\providecommand{\url}[1]{#1}
\csname url@samestyle\endcsname
\providecommand{\newblock}{\relax}
\providecommand{\bibinfo}[2]{#2}
\providecommand{\BIBentrySTDinterwordspacing}{\spaceskip=0pt\relax}
\providecommand{\BIBentryALTinterwordstretchfactor}{4}
\providecommand{\BIBentryALTinterwordspacing}{\spaceskip=\fontdimen2\font plus
\BIBentryALTinterwordstretchfactor\fontdimen3\font minus
  \fontdimen4\font\relax}
\providecommand{\BIBforeignlanguage}[2]{{%
\expandafter\ifx\csname l@#1\endcsname\relax
\typeout{** WARNING: IEEEtran.bst: No hyphenation pattern has been}%
\typeout{** loaded for the language `#1'. Using the pattern for}%
\typeout{** the default language instead.}%
\else
\language=\csname l@#1\endcsname
\fi
#2}}
\providecommand{\BIBdecl}{\relax}
\BIBdecl

\bibitem{song1990performance}
X.~Song and J.~W.-S. Liu, ``Performance of multiversion concurrency control
  algorithms in maintaining temporal consistency,'' in \emph{IEEE Fourteenth
  Annual International Computer Software and Applications Conference}, 1990,
  pp. 132--133.

\bibitem{kaul2012real}
S.~Kaul, R.~Yates, and M.~Gruteser, ``Real-time status: How often should one
  update?'' in \emph{IEEE INFOCOM}, 2012, pp. 2731--2735.

\bibitem{yates2021age}
R.~D. Yates, Y.~Sun, D.~R. Brown, S.~K. Kaul, E.~Modiano, and S.~Ulukus, ``Age
  of information: An introduction and survey,'' \emph{IEEE J. Select. Areas in
  Commun.}, vol.~39, no.~5, pp. 1183--1210, 2021.

\bibitem{pappas2022age}
N.~Pappas, M.~A. Abd-Elmagid, B.~Zhou, W.~Saad, and H.~S. Dhillon, \emph{Age of
  Information: Foundations and Applications}.\hskip 1em plus 0.5em minus
  0.4em\relax Cambridge University Press, 2022.

\bibitem{sun2017update}
Y.~Sun, E.~Uysal-Biyikoglu, R.~D. Yates, C.~E. Koksal, and N.~B. Shroff,
  ``Update or wait: How to keep your data fresh,'' \emph{IEEE Trans. Inf.
  Theory}, vol.~63, no.~11, pp. 7492--7508, 2017.

\bibitem{yates2015lazy}
R.~D. Yates, ``Lazy is timely: Status updates by an energy harvesting source,''
  in \emph{IEEE ISIT}, 2015, pp. 3008--3012.

\bibitem{kadota2018optimizing}
I.~Kadota, A.~Sinha, and E.~Modiano, ``Optimizing age of information in
  wireless networks with throughput constraints,'' in \emph{IEEE INFOCOM},
  2018, pp. 1844--1852.

\bibitem{SunSPAWC2018}
Y.~Sun and B.~Cyr, ``Information aging through queues: A mutual information
  perspective,'' in \emph{Proc. IEEE SPAWC Workshop}, 2018.

\bibitem{SunNonlinear2019}
------, ``Sampling for data freshness optimization: Non-linear age functions,''
  \emph{J. Commun. Netw.}, vol.~21, no.~3, pp. 204--219, 2019.

\bibitem{chen2021uncertainty}
G.~Chen, S.~C. Liew, and Y.~Shao, ``Uncertainty-of-information scheduling: A
  restless multiarmed bandit framework,'' \emph{IEEE Trans. Inf. Theory},
  vol.~68, no.~9, pp. 6151--6173, 2022.

\bibitem{wang2022framework}
Z.~Wang, M.-A. Badiu, and J.~P. Coon, ``A framework for characterizing the
  value of information in hidden markov models,'' \emph{IEEE Trans. Inf.
  Theory}, vol.~68, no.~8, pp. 5203--5216, 2022.

\bibitem{orneeTON2021}
T.~Z. Ornee and Y.~Sun, ``Sampling and remote estimation for the
  {O}rnstein-{U}hlenbeck process through queues: Age of information and
  beyond,'' \emph{IEEE/ACM Trans. Netw.}, vol.~29, no.~5, pp. 1962--1975, 2021.

\bibitem{Tripathi2019}
V.~Tripathi and E.~Modiano, ``A {Whittle} index approach to minimizing
  functions of age of information,'' in \emph{IEEE Allerton}, 2019, pp.
  1160--1167.

\bibitem{klugel2019aoi}
M.~Kl{\"u}gel, M.~H. Mamduhi, S.~Hirche, and W.~Kellerer, ``Ao{I}-penalty
  minimization for networked control systems with packet loss,'' in \emph{IEEE
  INFOCOM Age of Information Workshop}, 2019, pp. 189--196.

\bibitem{bedewy2021optimal}
A.~M. Bedewy, Y.~Sun, S.~Kompella, and N.~B. Shroff, ``Optimal sampling and
  scheduling for timely status updates in multi-source networks,'' \emph{IEEE
  Trans. Inf. Theory}, vol.~67, no.~6, pp. 4019--4034, 2021.

\bibitem{sun2019closed}
J.~Sun, Z.~Jiang, B.~Krishnamachari, S.~Zhou, and Z.~Niu, ``Closed-form
  {W}hittle’s index-enabled random access for timely status update,''
  \emph{IEEE Trans. Commun.}, vol.~68, no.~3, pp. 1538--1551, 2019.

\bibitem{Kadota2018}
I.~Kadota, A.~Sinha, E.~Uysal-Biyikoglu, R.~Singh, and E.~Modiano, ``Scheduling
  policies for minimizing age of information in broadcast wireless networks,''
  \emph{IEEE/ACM Trans. Netw.}, vol.~26, no.~6, pp. 2637--2650, 2018.

\bibitem{ornee2023whittle}
T.~Z. Ornee and Y.~Sun, ``A {Whittle} index policy for the remote estimation of
  multiple continuous {G}auss-{M}arkov processes over parallel channels,''
  \emph{ACM MobiHoc}, 2023.

\bibitem{pan2023sampling}
J.~Pan, Y.~Sun, and N.~B. Shroff, ``Sampling for remote estimation of the
  {W}iener process over an unreliable channel,'' \emph{ACM Sigmetrics}, 2023.

\bibitem{sun2023age}
Y.~Sun and S.~Kompella, ``Age-optimal multi-flow status updating with errors: A
  sample-path approach,'' \emph{J. Commun. Netw.}, vol.~25, no.~5, pp.
  570--584, 2023.

\bibitem{sun2022age}
Y.~Sun, I.~Kadota, R.~Talak, and E.~Modiano, \emph{Age of information: A new
  metric for information freshness}.\hskip 1em plus 0.5em minus 0.4em\relax
  Springer Nature, 2020.

\bibitem{SunTIT2020}
Y.~Sun, Y.~Polyanskiy, and E.~Uysal, ``Sampling of the {Wiener} process for
  remote estimation over a channel with random delay,'' \emph{IEEE Trans. Inf.
  Theory}, vol.~66, no.~2, pp. 1118--1135, 2020.

\bibitem{tripathi2021computation}
V.~Tripathi, L.~Ballotta, L.~Carlone, and E.~Modiano, ``Computation and
  communication co-design for real-time monitoring and control in multi-agent
  systems,'' in \emph{IEEE WiOpt}, 2021, pp. 1--8.

\bibitem{shisher2021age}
M.~K.~C. Shisher, H.~Qin, L.~Yang, F.~Yan, and Y.~Sun, ``The age of correlated
  features in supervised learning based forecasting,'' in \emph{IEEE INFOCOM
  Age of Information Workshop}, 2021.

\bibitem{ShisherMobihoc}
M.~K.~C. Shisher and Y.~Sun, ``How does data freshness affect real-time
  supervised learning?'' \emph{ACM MobiHoc}, 2022.

\bibitem{shisher2023learning}
M.~K.~C. Shisher, B.~Ji, I.-H. Hou, and Y.~Sun, ``Learning and communications
  co-design for remote inference systems: Feature length selection and
  transmission scheduling,'' \emph{IEEE J. Sel. Areas Inf. Theory}, vol.~4, pp.
  524--538, 2023.

\bibitem{ShisherToN}
M.~K.~C. Shisher, Y.~Sun, and I.-H. Hou, ``Timely communications for remote
  inference,'' \emph{submitted}, 2023.

\bibitem{ornee2023context}
T.~Z. Ornee, M.~K.~C. Shisher, C.~Kam, and Y.~Sun, ``Context-aware status
  updating: Wireless scheduling for maximizing situational awareness in
  safety-critical systems,'' in \emph{IEEE MILCOM}, 2023, pp. 194--200.

\bibitem{ari2023goal}
C.~Ari, M.~K.~C. Shisher, E.~Uysal, and Y.~Sun, ``Goal-oriented communications
  for remote inference with two-way delay,'' \emph{arXiv preprint
  arXiv:2311.11143}, 2023.

\bibitem{jakes1994microwave}
W.~C. Jakes and D.~C. Cox, \emph{Microwave mobile communications}.\hskip 1em
  plus 0.5em minus 0.4em\relax Wiley-IEEE press, 1994.

\bibitem{stock2001vector}
J.~H. Stock and M.~W. Watson, ``Vector autoregressions,'' \emph{Journal of
  Economic perspectives}, vol.~15, no.~4, pp. 101--115, 2001.

\bibitem{isaksson1981computer}
A.~Isaksson, A.~Wennberg, and L.~H. Zetterberg, ``Computer analysis of eeg
  signals with parametric models,'' \emph{Proceedings of the IEEE}, vol.~69,
  no.~4, pp. 451--461, 1981.

\bibitem{champati2019performance}
J.~P. Champati, M.~H. Mamduhi, K.~H. Johansson, and J.~Gross, ``Performance
  characterization using aoi in a single-loop networked control system,'' in
  \emph{IEEE INFOCOM Age of Information Workshop}, 2019, pp. 197--203.

\bibitem{ayan2019age}
O.~Ayan, M.~Vilgelm, M.~Kl{\"u}gel, S.~Hirche, and W.~Kellerer,
  ``Age-of-information vs. value-of-information scheduling for cellular
  networked control systems,'' in \emph{Proceedings of the 10th ACM/IEEE
  International Conference on Cyber-Physical Systems}, 2019, pp. 109--117.

\bibitem{Dawid1998}
A.~P. Dawid, ``Coherent measures of discrepancy, uncertainty and dependence,
  with applications to {Bayesian} predictive experimental design,''
  \emph{Technical Report 139}, 1998.

\bibitem{farnia2016minimax}
F.~Farnia and D.~Tse, ``A minimax approach to supervised learning,''
  \emph{NIPS}, vol.~29, pp. 4240--4248, 2016.

\bibitem{soleymani2016optimal}
T.~Soleymani, S.~Hirche, and J.~S. Baras, ``Optimal self-driven sampling for
  estimation based on value of information,'' in \emph{IEEE WODES}, 2016, pp.
  183--188.

\bibitem{cover1999elements}
T.~M. Cover, \emph{Elements of Information Theory}.\hskip 1em plus 0.5em minus
  0.4em\relax John Wiley \& Sons, 1999.

\bibitem{polyanskiy2014lecture}
Y.~Polyanskiy and Y.~Wu, ``Lecture notes on information theory,'' \emph{Lecture
  Notes for MIT (6.441), UIUC (ECE 563), Yale (STAT 664)}, no. 2012-2017, 2014.

\bibitem{Dawid2004}
P.~D. Gr{\"u}nwald and A.~P. Dawid, ``Game theory, maximum entropy, minimum
  discrepancy and robust {Bayesian} decision theory,'' \emph{Annals of
  Statistics}, vol.~32, no.~4, pp. 1367--1433, 08 2004.

\bibitem{papoulis2002probability}
A.~Papoulis and S.~Unnikrishna~Pillai, \emph{Probability, random variables and
  stochastic processes (4th edition)}, 2002.

\end{thebibliography}
\end{document}